\newtheorem*{problem*}{Problem}
\newtheorem*{theorem*}{Theorem}
\newcommand{\OPT}{\textnormal{OPT}}
\newcommand{\ALG}{\textnormal{ALG}}
\newcommand{\N}{\mathbb{N}}
\newcommand{\priority}{\ell}
\newcommand{\partition}{S}
\newcommand{\sparsification}{H}
\begin{document}
\title{Multi-Priority Graph Sparsification\thanks{Supported in part by NSF grants CCF-1740858, CCF-1712119, and DMS-1839274.}
}


%
%
\author{Reyan Ahmed
\and
Keaton Hamm
\and
Stephen Kobourov
\and
Mohammad Javad Latifi Jebelli
\and
Faryad Darabi Sahneh
\and
Richard Spence
}
\institute{~}

\authorrunning{R. Ahmed et al.}
%
%
\maketitle              
\begin{abstract}
A \emph{sparsification} of a given graph $G$ is a sparser graph (typically a subgraph) which aims to approximate or preserve some property of $G$. Examples of sparsifications include but are not limited to spanning trees, Steiner trees, spanners, emulators, and distance preservers. Each vertex has the same priority in all of these problems.
However, real-world graphs typically assign different ``priorities'' or ``levels'' to different vertices, in which higher-priority vertices require higher-quality connectivity between them. Multi-priority variants of the Steiner tree problem have been studied in prior literature but this generalization is much less studied for other sparsification problems.
In this paper, we define a generalized multi-priority problem and present a rounding-up approach that can be used for a variety of graph sparsifications.
Our analysis provides a systematic way to compute approximate solutions to multi-priority variants of a wide range of graph sparsification problems given access to a single-priority subroutine.

\keywords{graph spanners \and  sparsification  \and approximation algorithms}
\end{abstract}

\section{Introduction}
A \emph{sparsification} of a graph $G$ is a  graph $\sparsification$ which preserves some property of $G$. Examples of sparsifications include spanning trees, Steiner trees, spanners, emulators, distance preservers, $t$--connected subgraphs, and spectral sparsifiers. Many sparsification problems are defined with respect to a given subset of vertices $T \subseteq V$ which we call \emph{terminals}: e.g., a \emph{Steiner tree} over $(G, T)$ requires a tree in $G$ which spans $T$.  
{\color{black}Most of the corresponding optimization problems of these sparsifications
are NP-hard to compute optimally, so we often seek approximation algorithms or other algorithms which yield good solutions in practice.}

{\color{black}Different sparsifications can play a significant role in tackling real-world network design problems containing a large number of nodes and edges.}
For example, networks arising in real-world applications can be of vast scale, and often contain millions of vertices and even more edges (social networks, epidemiological networks, road networks, etc.). Visualizing such large  networks at once with all important information is impossible, hence it is desirable to have a multi-priority structure for the network, in which low priority vertices and edges capture finer detail, while higher priority vertices and edges form a significantly sparser network, which nonetheless still represents the underlying structure.

{\color{black}In this paper, we are interested in generalizations of sparsification problems where each vertex possesses one of $k+1$ different \emph{priorities} (between 0 and $k$, where $k$ is the highest), in which the goal is to construct a graph $\sparsification$ such that (i) every edge in $\sparsification$ has a \emph{rate} between 1 and $k$ inclusive, and (ii) For all $i \in \{1,\ldots,k\}$, the edges in $\sparsification$ of rate $\ge i$ constitute a given type of sparsifier over the vertices whose priority is at least $i$.} Throughout, we assume a vertex with priority 0 need not be included. This multi-priority problem has been studied in the context of Steiner or multicast trees~\cite{Balakrishnan1994,Charikar2004ToN,Chuzhoy2008}, where the objective is to connect a source to a set of heterogeneous receivers, each with a certain priority request. However, this generalization is far less studied for other types of sparsifications. 
We investigate multi-priority generalizations of other graph problems and provide approximation algorithms for these problems.

\subsection{Problem definition}\label{subsection:definition}
Here, we will consider a general range of sparsification problems. A sparsification $\sparsification$ is \textit{valid} if it satisfies a set of constraints that depends on the type of sparsification. Given $G$ and a set of terminals $T$, let $\mathcal{F}$ be the set of all valid sparsifications $\sparsification$ of $G$ over $T$. Throughout, we will assume that $\mathcal{F}$ satisfies the following \textit{general constraints} that must hold for all types of sparsification we consider in this article: for all $\sparsification \in \mathcal{F}$:
\begin{itemize}
    \item $\sparsification$ contains all terminals $T$ in the same connected component, and
    \item $\sparsification$ is a subgraph of $G$.
\end{itemize}
Besides these general constraints, there are additional constraints that depend on the specific type of sparsification as described below. Note that $|\mathcal{F}|$ is usually too large to enumerate, but in many cases $\mathcal{F}$ can be implicitly stated via constraints, and we will assume that checking if $\sparsification \in \mathcal{F}$ can be done in polynomial-time. Several such sparsification problems we focus on include: \vspace{3pt}

{\color{black}
\noindent \textit{Steiner trees:} A Steiner tree over $(G, T)$ is a subtree that spans $T$. In this case, we may let $\mathcal{F}$ be the set of all Steiner trees over $(G, T)$. The specific constraint for this problem is the sparsification $\sparsification$ which is a Steiner tree over $(G, T)$; we denote this constraint by \textit{tree constraint}.
}

\noindent \textit{Subset spanners and distance preservers:} A spanner is a subgraph which approximately preserves pairwise distances in the original graph $G$. A \emph{subset spanner} needs only approximately preserve distances between a subset $T \subseteq V$ of vertices. Two common types of spanners include \emph{multiplicative} $\alpha$-spanners, which preserve distances in $G$ up to a multiplicative $\alpha$ factor, and \emph{additive} $+\beta$ spanners, which preserve distances up to additive $+\beta$ error. A \emph{distance preserver} is a special case of the spanner where distances are preserved exactly. The specific constraints are basically the distance constraints applied from the problem definition. For example, for multiplicative $\alpha$-spanners the constraints are that the distance in $\sparsification$ between any pair of vertices is no more than $\alpha$ times the distance in $G$. {\color{black}We denote these types of constraints by \textit{distance constraints}}. \vspace{3pt}

The above problems are widely studied in literature; see surveys~\cite{spannersurvey,hauptmann2013compendium}. An example sparsification which we will not consider in the above framework is the \emph{emulator}, which approximates distances but is not necessarily a subgraph.

{\color{black}
In a standard weighted graph $G = (V, E)$, given a set of edges $E' \subseteq E$, the weight of the induced subgraph just depends on the weight of edges in $E'$. In this paper we study a problem where the weight not only depends on the edge weights but also on the rate of the edges. We denote the weight of the edge $e$ having rate $r$ by $w(e, r)$. Different strategies can be used to increase the weight of the edge as the rate increases. One natural setting is the linear increment: $w(e, r) = r \ w(e, 1)$. We can set $w(e, 1) = w(e)$, the input weight of $e$. It is also possible to consider $w(e, 1) = 1$ if the graph is unweighted. 
We assess the quality of a sparsification $\sparsification$ by $\text{weight}(H) := \sum_{e \in E(H)}$ $w(e, R(e))$.
We define a $k$-priority sparsification as follows, where $[k] := \{1,2,\ldots,k\}$.

\begin{definition}[$k$-priority sparsification] \label{def:main}
Let $G(V,E)$ be a graph, where each vertex $v \in V$ has priority $\priority(v) \in [k] \cup \{0\}$. Let $T_i := \{v \in V \mid \priority(v) \ge i\}$. 
Let $w(e)$ be the edge weight of edge $e$. The weight of an edge having rate $i$ is denoted by $w(e, i) = i \ w(e, 1) = i \ w(e)$.
For $i \in [k]$, let $\mathcal{F}_i$ denote the set of all valid sparsifications over $T_i$. A subgraph $\sparsification$ with edge rates $R:E(H) \to [k]$ is a $k$-priority sparsification if for all $i \in [k]$, the subgraph of $\sparsification$ induced by all edges of rate $\ge i$ belongs to $\mathcal{F}_i$.
We assess the quality of a sparsification $\sparsification$ by $\text{weight}(H) := \sum_{e \in E(H)}$ $w(e, R(e))$.
\end{definition}

 Note that $\sparsification$ induces a nested sequence of $k$ subgraphs, and can also be interpreted as a \emph{multi-level} graph sparsification~\cite{MLST2018}. A road map serves as a good analogy of a multi-level sparsification, as zooming out only displays highways and other major roads (Figure~\ref{FIG:Arizona}). Figure~\ref{fig:mlgs-example} shows an example of 2-priority sparsification with distance constraints where $\mathcal{F}_i$ is the set of all subset $+2$ spanners over $T_i$; that is, the vertex pairs of $T_i$ is connected by a path in $\sparsification_i$ at most 2 edges longer than the corresponding shortest path in $G$. Similarly, Figure~\ref{fig:mlst-example} shows an example of 2-priority sparsification with a tree constraint.
}

\begin{figure}[h!]
    \centering
    \includegraphics[width=0.3\textwidth]{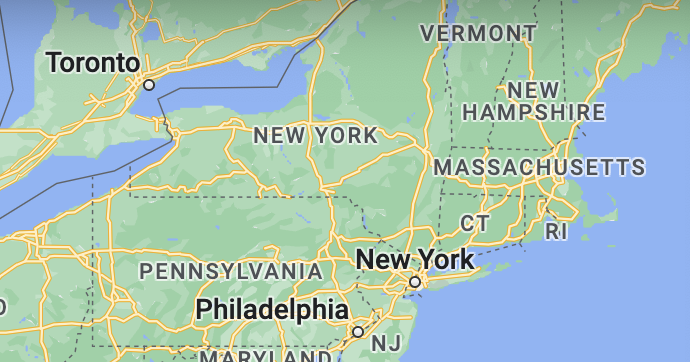}\;\;
    \includegraphics[width=0.3\textwidth]{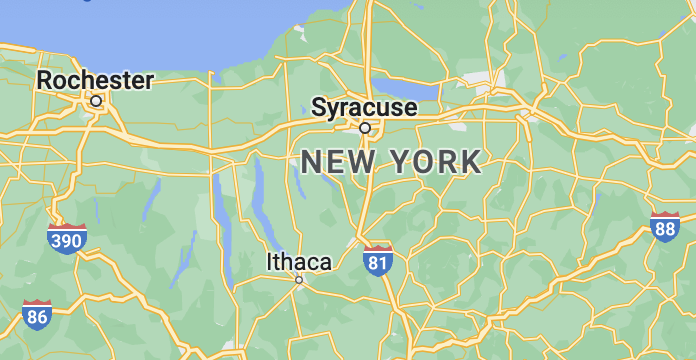}
    \;\;\includegraphics[width=0.3\textwidth]{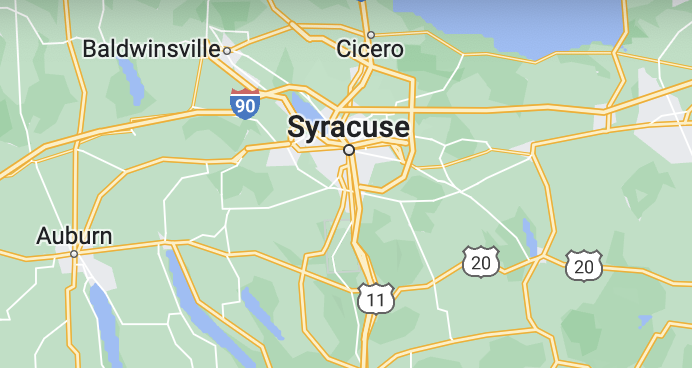}
    \caption{\color{black}Three different zoom levels of a map of New York (\emph{Map data: Google}).}
    \label{FIG:Arizona}
\end{figure}

\begin{figure}[h]
    \centering
    \includegraphics[width=0.9\textwidth]{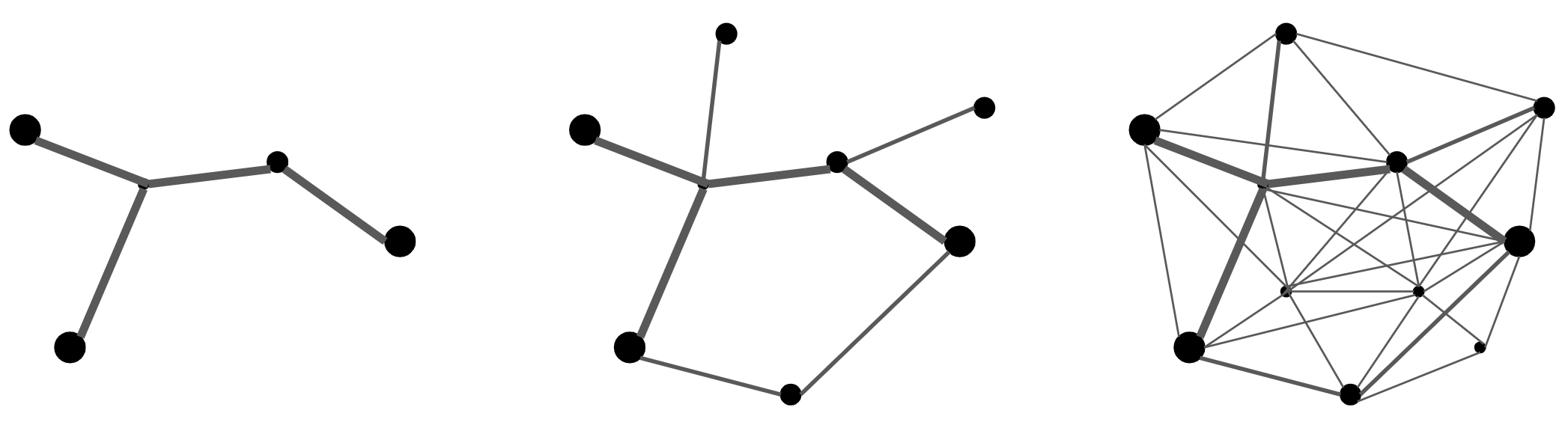}
    \caption{\color{black}\emph{Right:} A graph $G$ with $k=2$ priorities, with four and three vertices of priority 1 and 2, respectively (indicated using small and large circles). \emph{mid:} The subgraph $\sparsification_1$ with edges of rates 2 and 1 (indicated using the thickness) of a 2-priority sparsification with distance constraints. \emph{left:} The subgraph $\sparsification_2$ with edges of rate 2 of the same 2-priority sparsification.}
    \label{fig:mlgs-example}
\end{figure}

\begin{figure}[h]
    \centering
    \includegraphics[width=0.9\textwidth]{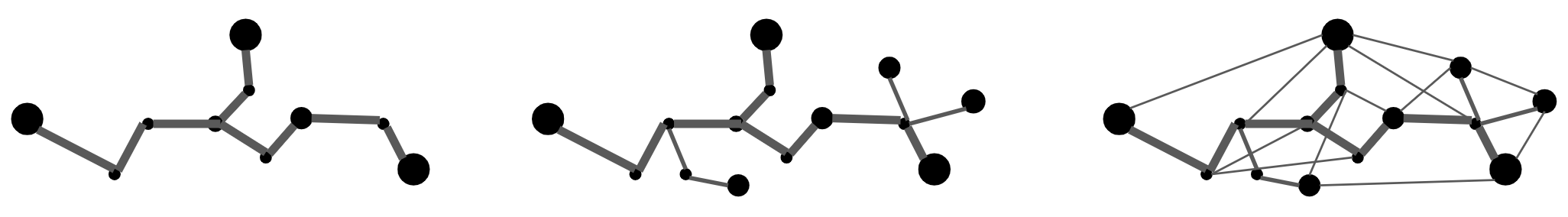}
    \caption{\color{black}\emph{Right:} A graph $G$ with $k=2$ priorities, with four and three vertices of priority 1 and 2, respectively (indicated using small and large circles). \emph{mid:} The subgraph $\sparsification_1$ with edges of rates 2 and 1 (indicated using the thickness) of a 2-priority sparsification with a tree constraint. \emph{left:} The subgraph $\sparsification_2$ with edges of rate 2 of the same 2-priority sparsification.}
    \label{fig:mlst-example}
\end{figure}

{\color{black}The linearly increasing $k$-priority instance is motivated by a natural large network visualization problem. Semantic zooming features similar to the Google map (Figure~\ref{FIG:Arizona}) are desirable while visualizing large networks. In semantic zooming, when an object appears at a particular level, it should not suddenly disappear after zooming in. In the context of network visualization, we can say that if an edge appears at a particular rate or level, then it should also appear in the lower levels. In other words, an edge can appear in multiple levels and the total weight is the highest level of appearance times the edge weight in level 1.}

Definition~\ref{def:main} is intentially open-ended to encompass a wide variety of sparsification problems. The $k$-priority problem is a generalization of many NP-hard problems, for example, Steiner trees, spanners, distance preservers, etc. These classical problems can be considered different variants of the 1-priority problem. Hence, the $k$-priority problem can not be simpler than the $1$-priority problem. In this paper, we are mainly interested in the following problem: how much harder is it to compute a $k$-priority sparsification, compared to the corresponding 1-priority sparsification problem?


We use the terms cost and weight interchangeably, as from an optimization point of view the term ``cost'' is more intuitive. Let $\OPT$ be an optimal solution to the $k$-priority problem and the cost of $\OPT$ be $\text{weight}(\OPT)$.

\begin{problem*} \label{prob:main}
Given $\langle G, P, w\rangle$ consisting of a graph $G$ with vertex priorities $\priority:V \to [k]\cup \{0\}$, can we compute a $k$-priority sparsification whose weight is small compared to $\text{weight}(\OPT)$?
\end{problem*}

\subsection{Related work}
\label{section:related}
The case where $\mathcal{F}_i$ consists of all Steiner trees over $T_i$ is known under different names including Priority Steiner Tree~\cite{Chuzhoy2008}, Multi-level Network Design~\cite{Balakrishnan1994}, Quality-of-Service Multicast Tree~\cite{Charikar2004ToN,Karpinski2005}, and Multi-level Steiner Tree~\cite{MLST2018}.

Charikar et al.~\cite{Charikar2004ToN} give two $O(1)$-approximations for the Priority Steiner Tree problem using a rounding approach which rounds the priorities of each terminal up to the nearest power of some fixed base (2 or $e$), then using a subroutine which computes an exact or approximate Steiner tree. If edge weights are arbitrary with respect to rate (not necessarily increasing linearly w.r.t. the input edge weights), the best known approximation algorithm achieves ratio $O(\min\{\log |T|, k\rho\}$~\cite{Charikar2004ToN,sahneh2021pst} where $\rho \approx 1.39$~\cite{Byrka2013} is an approximation ratio for the edge-weighted Steiner tree problem. On the other hand, the Priority Steiner tree problem cannot be approximated with ratio $c \log \log n$ unless $\text{NP} \subseteq \text{DTIME}(n^{O(\log \log \log n)})$~\cite{Chuzhoy2008}.

Ahmed et al.~\cite{MLGS_proceeding} describe an experimental study for 
the $k$-priority problem in the case where $\mathcal{F}_i$ consists of all subset multiplicative spanners over $T_i$. They show that simple heuristics for computing multi-priority spanners already perform nearly optimally on a variety of random graphs.
Multi-priority variants of additive spanners have also been studied~\cite{ahmed_et_al:LIPIcs.SEA.2021.16}, although with objective functions that are more restricted than our setting. 

\subsection{Our contribution}\label{section:our_contribution}
We have extended the rounded-up approach provided by Charikar et al.~\cite{Charikar2004ToN}. The original approach has several limitations:\begin{itemize}
    \item The algorithm only ensures that for each priority the terminals are connected. By removing the redundant paths, one can generate a tree-like structure, but it is not obvious how to generate  sparsifications for distance preservers or spanners.
    \item The algorithm computes a solution for different priorities independently. However, we may need to satisfy constraints between different priorities. For example, in graph spanners, we have to maintain the distance constraints between pairs of vertices with different priorities. The independent solutions approach does not easily handle distance constraints between different priorities.
\end{itemize}

\noindent We generalize the algorithm to deal with these limitations and the result can be applied not only Steiner trees but to other sparsifiers. Specifically:
\begin{itemize}
    \item We define a merging operation that can handle graph spanners, distance preservers, and similar structures in addition to Steiner trees.
    \item We propose different partitioning of the terminals that helps to satisfy the distance constraints between different priorities and study the trade-offs between different partitioning techniques.
    \item {\color{black}We prove an approximation guarantee for all considered sparsifications using proof by induction that is independent of the partitioning method.}
\end{itemize}


\section{A general approximation for $k$-priority sparsification}

{\color{black}In this section, we generalize the rounding approach of~\cite{Charikar2004ToN}.
The approach has two main steps: the first step rounds up the priority of all terminals to the nearest power of 2; the second step computes a solution independently for each rounded-up priority and merges all solutions from the highest priority to the lowest priority. Each of these steps can make the solution at most two times worse than the optimal solution. Hence, overall the algorithm is a $4$-approximation, we provide the pseudocode of the algorithm below. }

\begin{algorithm}
\caption*{\textbf{Algorithm} $k$-priority Approximation($G = (V, E)$)}
\begin{algorithmic}
\State // Round up the priorities
\For{each terminal $v \in V$}
\State Round up the priority of $v$ to the nearest power of 2
\EndFor
\State // Independently compute the solutions
\State Compute a partition $\partition_1, \partition_2, \cdots, \partition_k$ from the rounded-up terminals
\For{each partition $\partition_i$}
\State Compute a $1$-priority solution on partition $\partition_i$
\EndFor
\State // Merge the independent solutions
\For{$i \in \{k, k-1, \cdots, 1\}$}
\State Merge the solution of $\partition_i$ to the solutions of lower priorities
\EndFor
\end{algorithmic}
\end{algorithm}

Independently computed solutions introduce some complicated situations for the $k$-priority problem. For example, consider the most natural \textit{partitioning} of the terminals to different priorities: if $t_i$ and $t_j$ are two terminals having priority $i$ and $j$ respectively (where $i<j$), then assign $t_i$ to $\partition_i$ and $t_j$ to $\partition_j\setminus\partition_i$. Here, $\partition_i$ and $\partition_j$ are the partitions having priority $i$ and $j$, respectively. In other words, this is an \textit{exclusive} partitioning: each terminal vertex is assigned to exactly one set of terminals. In the second step, we can compute a solution for each terminal set independently. Although exclusive partitioning is a natural approach, it may generate invalid solutions. If we select a vertex with the highest priority as a root and add it to each partition, then the solution based on exclusive partitioning will satisfy the general constraints mentioned in Section~\ref{subsection:definition}. It will also satisfy the additional constraints for the Steiner tree after merging the solutions. However, it may not satisfy the additional constraints of other types of problems: for example, in graph spanners, we may need to satisfy the distance constraints between two terminal vertices that belong to two different partitions.

We now propose two partitioning techniques that will guarantee valid solutions. The first one is the \textit{inclusive} partitioning: each terminal $t_j$ is assigned to each partition in $\{\partition_i: i \leq \priority(t_j)\}$. In other words, $\partition_i = T_i$ for all $i$.

\begin{definition}
An inclusive partitioning of the terminal vertices of a $k$-priority instance assigns each terminal $t_j$ to each partition in $\{\partition_i: i \leq \priority(t_j)\}$.
\end{definition}

We propose another partitioning that is based on pairs of terminals. Consider a pair of terminals $t_i$ and $t_j$ and w.l.o.g. let $\priority(t_i) \leq \priority(t_j)$. Then we assign the priority to this pair equal to $\min(\priority(t_i), \priority(t_j))$. We partition the pairs of terminals for each priority and refer to it as the \textit{pairwise} partitioning.

\begin{definition}
A pairwise partitioning of the terminal vertices of a $k$-priority instance assigns the priority of each pair of terminals $t_i$ and $t_j$ equal to $\min(\priority(t_i),$ $ \priority(t_j))$. Based on this assignment, we can create a partitioning; $\forall k \partition_k = \{(t_i, t_j): min($ $\priority(t_i), \priority(t_j))=k\}$.
\end{definition}

We compute partitions $\partition_1, \partition_2, \cdots, \partition_k$ from the terminal sets $T_1, T_2,$ $\cdots, T_k$ and use them  to compute the independent solutions. 
{\color{black}Here, we require one more assumption: given $1 \le i < j \le k$ and two partitioned sets $\partition_i, \partition_j$, any two sparsifications of rate $i$ and $j$ can be ``merged'' to produce a third sparsification of rate $i$.} Specifically, if $\sparsification_i \in \mathcal{F}_i$, and $\sparsification_j \in \mathcal{F}_j$, then there is a graph $\sparsification_{i,j} \in \mathcal{F}_i$ such that $\sparsification_j \subseteq H_{i,j} \subseteq H_i \cup H_j$. For the above sparsification problems (e.g., Steiner tree, spanners), we can often let $\sparsification_{i,j}$ be the union of $\sparsification_i$ and $\sparsification_j$, though edges may need to be pruned to ensure that $\sparsification_{i,j}$ is a Steiner tree (by removing cycles).

\begin{definition}
Let $\partition_i$ and $\partition_j$ be two partitions where $i<j$. Let $\sparsification_i$ and $\sparsification_j$ be the independently computed solution for the terminal set $T_i$ and $T_j$ respectively. We say that the solution $\sparsification_j$ is merged with solution $\sparsification_i$ if we complete the following two steps:\begin{enumerate}
    \item If an edge $e$ is not present in $\sparsification_i$ but present in $\sparsification_j$, then we add $e$ to $\sparsification_i$.
    \item {\color{black}If there is a tree constraint, then prune some lower-rated edges to ensure there is no cycle.}
\end{enumerate}
\end{definition}

{\color{black}For a tree constraint, we need to prune lower-rated edges since pruning higher-rated edges may disconnect the tree. More specifically, for each pair of terminals $u$ and $v$ in $\partition_j$, we check if there exists more than one path in $\sparsification_j$. We prune edges until there is only one path $P$ between $u$ and $v$. Now consider the solution $\sparsification_i$. If a path between $u$ and $v$ other than $P$ exists in $\sparsification_i$, we remove more edges until $P$ is the only path between $u$ and $v$. At the end, there is exactly one path for each pair of terminals in both $\sparsification_i$ and $\sparsification_j$. We need the second step of merging particularly for sparsifications with tree constraints. Although the merging operation treats these sparsifications differently, we will later show that the pruning step does not play a significant role in the approximation guarantee.}

{\color{black}Algorithm $k$-priority Approximation computes a partition from the rounded-up terminals. The following claims show that if the algorithm computes either an inclusive or pairwise partitioning, then the algorithm provides a valid solution.}

\begin{lemma}\label{lemma:inclusive_partitioning}
{\color{black}If Algorithm $k$-priority Approximation computes an inclusive partitioning, then the algorithm provides a valid solution.}
\end{lemma}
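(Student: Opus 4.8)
The plan is to show that each of the three nested subgraphs produced by the algorithm under inclusive partitioning satisfies the general constraints together with the problem-specific constraints, so that $\sparsification_i \in \mathcal{F}_i$ for every $i \in [k]$, which is exactly what Definition~\ref{def:main} requires of a valid $k$-priority sparsification. The key observation enabling this is that inclusive partitioning sets $\partition_i = T_i$, so the independently computed solution for $\partition_i$ is already a valid sparsifier over $T_i$ \emph{before} any merging takes place; merging can only add edges (step 1) or, in the tree case, prune redundant edges (step 2), and I must argue that neither operation destroys membership in $\mathcal{F}_i$.

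First I would process the levels from highest ($i=k$) down to lowest and track what $\sparsification_i$ looks like after it has absorbed all higher-rate solutions. The merge step replaces $\sparsification_i$ by some $\sparsification_{i,i+1,\ldots,k}$ satisfying $\sparsification_j \subseteq \sparsification_{i,\ldots,k} \subseteq \sparsification_i \cup \sparsification_{i+1} \cup \cdots \cup \sparsification_k$ for the relevant $j$; I would invoke the merging assumption stated in the excerpt, namely that for $i<j$ and $\sparsification_i \in \mathcal{F}_i$, $\sparsification_j \in \mathcal{F}_j$ there exists $\sparsification_{i,j} \in \mathcal{F}_i$ with $\sparsification_j \subseteq \sparsification_{i,j} \subseteq \sparsification_i \cup \sparsification_j$, and iterate it. Because $\partition_i = T_i$ under inclusive partitioning, the hypothesis $\sparsification_i \in \mathcal{F}_i$ needed at each application of the merging assumption is supplied directly by the single-priority subroutine. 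Then I would check the general constraints by hand: subgraph-of-$G$ is closed under union and under edge deletion, and connectivity of $T_i$ in $\sparsification_i$ holds because the subroutine output already connects all of $T_i$ and merging only adds edges (the tree-pruning step is designed to preserve one $u$--$v$ path for each terminal pair, hence preserves connectivity).

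Second I would handle the two families of problem-specific constraints separately. For distance constraints (multiplicative/additive spanners, distance preservers) the point is monotonicity: adding edges cannot increase distances, so if $\sparsification_i$ already satisfies the required distance bound for every pair in $T_i$, then the superset $\sparsification_{i,\ldots,k}$ does too; and since $\partition_i = T_i$, \emph{every} pair that needs a distance guarantee at level $i$ is a pair within $T_i$ and was already handled by the subroutine — this is precisely where inclusive partitioning fixes the limitation of exclusive partitioning noted earlier. For the tree constraint the argument is the one sketched in the excerpt: the pruning step removes only lower-rated edges and leaves exactly one path between each pair of terminals, so the result is connected, acyclic on the terminal set, and remains a subtree spanning $T_i$; I would spell out that pruning lower-rated edges never disconnects a higher level because those edges are not present at the higher level.

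The main obstacle I anticipate is making the tree-constraint case fully rigorous: after the union in step~1 the graph $\sparsification_i \cup \cdots \cup \sparsification_k$ need not be a tree, and I must argue that the pruning procedure can always be carried out so that the final object is simultaneously a valid tree at level $i$ \emph{and} still contains a valid tree at every higher level $j>i$ — i.e., that the order of pruning (highest level first, as the merge loop runs from $k$ down to $1$) guarantees consistency across all levels at once, with no level's tree constraint being violated by a prune performed on behalf of a lower level. I would resolve this by an induction on $i$ decreasing, maintaining the invariant that after merging levels $k,\ldots,i$ the current edge set restricted to rate $\ge j$ is a Steiner tree over $T_j$ for every $j \ge i$, and showing the merge-and-prune at level $i-1$ preserves this invariant because it only touches edges of rate exactly $i-1$ (the newly added ones) or prunes to break cycles among edges of rate $\ge i-1$ without removing any rate-$\ge i$ edge whose deletion would disconnect $T_i$.
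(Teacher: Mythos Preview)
Your proposal is correct and rests on the same key observation as the paper's proof: inclusive partitioning gives $\partition_i = T_i$, so the single-priority subroutine applied to $\partition_i$ already satisfies every constraint required at level $i$ (in particular, for any pair $t_i, t_j$ with $\priority(t_i) \le \priority(t_j)$, both lie in $\partition_{\priority(t_i)}$), and merging---adding edges, plus pruning for the tree case---preserves validity. The paper's proof states exactly this in four sentences without the explicit iteration of the merging assumption or the level-by-level induction for the tree case that you outline; your treatment is more careful, especially on the tree-pruning consistency issue, but the approach is the same.
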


\begin{proof}
To compute the solution of the $k$-priority instance we merge all the independent solutions, that is, if an edge is present for a particular rate $i$, then it is also present for all rates smaller than $i$. If we are computing the Steiner tree, then the merging operation ensures that we have exactly one path for each pair of terminals. Hence, we have a valid priority Steiner tree.
Now suppose that we are computing a spanner (or preserver). Consider a pair of terminals $t_i \in \partition_i$ and $t_j \in \partition_j$, w.l.o.g. we assume $\priority(t_i) \leq \priority(t_j)$. Then $t_j \in \partition_i$ since the partitioning is inclusive. Hence there is a path between $t_i$ and $t_j$ satisfying the distance constraint since we have computed an independent spanner on $\partition_i$. Hence the merged solution is valid.  
\end{proof}

\begin{lemma}\label{lemma:pairwise_partitioning}
{\color{black}If Algorithm $k$-priority Approximation computes a pairwise partitioning, then the algorithm provides a valid solution.}
\end{lemma}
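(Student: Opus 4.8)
The plan is to follow the same skeleton as the proof of Lemma~\ref{lemma:inclusive_partitioning}, but to exploit the defining feature of pairwise partitioning: a demand pair is placed at priority equal to the \emph{minimum} of its two endpoint priorities. First I would pin down the structure of the merged output. Since the merge step processes rates from $k$ down to $1$ and only copies edges of a higher-priority solution into lower-priority solutions (and, in the tree case, prunes cycles afterward), the final rate of an edge $e$ is the largest index $j$ for which $e$ survives in the solution $\sparsification_j$ computed for $\partition_j$. Hence the subgraph $\sparsification_i$ consisting of all edges of rate $\ge i$ is exactly the union $\bigcup_{j \ge i} \sparsification_j$ of these independently computed solutions when there is no tree constraint, and a connected spanning subgraph of that union otherwise.

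Next I would check that $\sparsification_i \in \mathcal{F}_i$ for each $i \in [k]$. Fix $i$ and take any pair $u,v \in T_i$, so $\priority(u) \ge i$ and $\priority(v) \ge i$; set $j := \min(\priority(u),\priority(v)) \ge i$. By the definition of pairwise partitioning, $(u,v) \in \partition_j$, so the subroutine computing the $1$-priority solution on $\partition_j$ guarantees a $u$–$v$ path in $\sparsification_j$ meeting the required constraint (Steiner connectivity for the tree constraint, or the multiplicative/additive distance bound for distance constraints). Because $j \ge i$, that solution lies inside $\sparsification_i$ by the previous paragraph, hence $\sparsification_i$ already contains a path witnessing the constraint for $\{u,v\}$. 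Ranging over all such pairs shows that every distance constraint over $T_i$ holds in $\sparsification_i$ and that $T_i$ lies in a single connected component; since each $\sparsification_j$ is a subgraph of $G$, so is $\sparsification_i$. Thus the general constraints and the distance constraints are satisfied, which settles the spanner and distance-preserver cases.

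For a tree constraint I would additionally argue that the pruning step turns the connected graph $\bigcup_{j \ge i}\sparsification_j$ into a Steiner tree over $T_i$ without invalidating any rate. Processing the merge from the highest rate downward, I would maintain the invariant that after handling rate $j$ every $\sparsification_{i'}$ with $i' \ge j$ is a valid priority Steiner tree over $T_{i'}$: the base case $i' = k$ is the subroutine's output, and each step only adds edges and then deletes cycles among the newly added lower-rated edges, which preserves connectivity of all pairs demanded at rates $\ge i'$ and hence validity at every higher rate. The main obstacle is precisely making this last point airtight — ensuring the pruning never removes an edge that is the unique link between some demanded pair at a rate $\le j$, so that no $T_{i'}$ becomes disconnected; the spanner and preserver cases, by contrast, are immediate from the $\min$-rule since no pruning occurs. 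Note that only the qualitative fact that pruning preserves connectivity is needed here; its effect on the total weight is deferred to the approximation analysis.
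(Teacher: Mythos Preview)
Your proposal is correct and follows essentially the same approach as the paper: identify that each demand pair $(u,v)$ lands in the partition $\partition_j$ with $j=\min(\priority(u),\priority(v))$, observe that the single-priority subroutine satisfies the pair's constraint there, and rely on merging to propagate the witnessing path to every level $i\le j$; for the tree case, both appeal to the pruning step. Your write-up is actually more careful than the paper's in two respects: you explicitly verify $\sparsification_i\in\mathcal{F}_i$ for \emph{every} $i$ (the paper only checks the constraint at level $j=\min(\priority(u),\priority(v))$ and leaves the propagation implicit), and you spell out the invariant that pruning, processed top-down, never disconnects a higher-rate demand pair.
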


\begin{proof}
After we merge the independent solutions, the general constraints will be satisfied. Also, the second step of the merging operation will make sure that for each pair of terminals there is exactly one path if we are computing priority Steiner trees. Hence, in that case, the output will be a valid priority Steiner tree. Now consider a sparsification where distance constraints must be satisfied to obtain a valid solution. Consider any pair of terminals $t_i$ and $t_j$. If $\priority(t_i) = \priority(t_j) = k$, then the partition $\partition_k$ contains this pair. Hence, when we compute the independent solution of $\partition_k$, the distance constraint of this pair is satisfied. Otherwise, let $\min(\priority(t_i), \priority(t_j))=k$. Then the distance constraint needs to be satisfied at priority $k$. Also, the pair will be in $\partition_k$. Hence, after computing the independent solution the constraint will be satisfied. Hence, in both cases, we have a valid sparsification. 
\end{proof}

Both the pairwise partitioning and inclusive partitioning are theoretically no worse than four times the optimal solution as we prove later. The proof is the same for both cases. However, in practice, pairwise partitioning will perform better than inclusive partitioning as indicated by the following claim.

\begin{lemma}\label{lemma:inclusive_pairwise}
{\color{black}The total number of distance constraints in pairwise partitioning is less than or equal to the number of distance constraints in an inclusive partitioning.}
\end{lemma}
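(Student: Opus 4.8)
The plan is to prove the inequality by counting, in each scheme, the total number of (terminal pair, level) incidences, since each such incidence is exactly one distance constraint that the corresponding independent subproblem is required to enforce. Throughout I write $n_i := |T_i|$, use that $T_1 \supseteq T_2 \supseteq \cdots \supseteq T_k$, and use that every terminal has priority at least $1$ (priority-$0$ vertices are not terminals). I would phrase "number of distance constraints'' precisely as this incidence count, so that the statement becomes a purely combinatorial comparison independent of the particular spanner or preserver being computed.

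Next I would write down the two counts explicitly. For the inclusive partitioning we have $\partition_i = T_i$, so the subset spanner computed on $\partition_i$ must (approximately) preserve the distance of every unordered pair of terminals in $T_i$, contributing $\binom{n_i}{2}$ constraints at level $i$ and $\sum_{i=1}^{k}\binom{n_i}{2}$ in total. Counting instead pair by pair: an unordered pair $\{u,v\}$ of terminals lies in $\partition_i = T_i$ exactly when $i \le \min(\priority(u),\priority(v))$, so it generates exactly $\min(\priority(u),\priority(v))$ distance constraints over all levels. For the pairwise partitioning, by definition each unordered pair $\{u,v\}$ of terminals is placed in the single partition $\partition_{\min(\priority(u),\priority(v))}$, hence generates exactly one distance constraint in total, and summing over pairs gives $\binom{n_1}{2}$ altogether. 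The one point to be careful about here is that in the pairwise scheme $\partition_i$ is a set of pairs rather than a set of vertices, so its contribution to the count is $|\partition_i|$, not $\binom{|\partition_i|}{2}$; with this convention $\sum_{i=1}^{k}|\partition_i|$ is just the number of terminal pairs.

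The comparison is then immediate term by term: for each unordered pair $\{u,v\}$ of terminals, the pairwise scheme contributes $1$ while the inclusive scheme contributes $\min(\priority(u),\priority(v)) \ge 1$, since $u$ and $v$ are terminals and therefore have priority at least $1$. Summing this inequality over all $\binom{n_1}{2}$ pairs of terminals gives that the total number of distance constraints under pairwise partitioning is at most that under inclusive partitioning, which is the claim; as a byproduct, equality holds precisely when $\min(\priority(u),\priority(v)) = 1$ for every pair, i.e.\ when at most one terminal has priority exceeding $1$ (in particular when $k=1$). I do not expect a genuine obstacle in this lemma: the only real work is fixing the bookkeeping convention in the first paragraph and noting the pairs-versus-vertices subtlety for $\partition_i$ in the pairwise scheme; once those are settled the argument is a one-line per-pair comparison followed by summation.
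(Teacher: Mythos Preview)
Your proposal is correct and follows essentially the same approach as the paper: both arguments compare, for each unordered pair of terminals, the single constraint it contributes under pairwise partitioning against the $\min(\priority(u),\priority(v))\ge 1$ constraints it contributes under inclusive partitioning, and then sum over pairs. Your write-up is in fact more careful than the paper's, since you make the pairs-versus-vertices bookkeeping explicit and you characterize the equality case.
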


\begin{proof}
{\color{black}In pairwise partitioning, the total number of distance constraints is equal to the total number of pairs in $\partition_1, \partition_2, \cdots, \partition_k$. On the other hand, the total number of distance constraints in inclusive partitioning is equal to $\sum_i {|S_i| \choose 2}$.}
Consider a pair of terminals $t_i$ and $t_j$ such that both $\priority(t_i)$ and $\priority(t_j)$ are not equal to 1. Then in the inclusive partitioning this pair will be considered in partitions $\partition_{\min(\priority(t_i), \priority(t_j))}$ to $\partition_1$. On the other hand, in the pairwise partitioning, this pair will be only considered in $\partition_{\min(\priority(t_i), \priority(t_j))}$. Hence, pairwise partitioning will have only one constraint for this pair and the inclusive partitioning will have more than one constraint. Overall, the pairwise partitioning will have a smaller number of constraints and better running time.
\end{proof}

{\color{black}Algorithm $k$-priority Approximation provides valid solutions for both inclusive partitioning and pairwise partitioning as shown in Lemma~\ref{lemma:inclusive_partitioning} and~\ref{lemma:pairwise_partitioning}. Lemma~\ref{lemma:inclusive_pairwise} shows that pairwise partitioning is better in terms of the number of distance constraints. We now provide an approximation guarantee for Algorithm $k$-priority Approximation that is independent of the partitioning method.}

\begin{theorem} \label{thm:rounding}
{\color{black}Consider an instance $\varphi = \langle G, P, w\rangle$ of the $k$-priority problem with linear edge weights. If we are given an oracle that can compute the minimum weight sparsification of $G$ over $T$, then with at most $\log_2 k + 1$ queries to the oracle, Algorithm $k$-priority computes a $k$-priority sparsification with weight at most $4 \ \text{weight}(\OPT)$.}
\end{theorem}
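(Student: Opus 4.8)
The plan is to follow the two-step structure suggested by the algorithm and bound the loss incurred at each step separately, then multiply the two factors. Let $\OPT$ denote an optimal $k$-priority sparsification with edge rates $R^*$, and let $\varphi'$ denote the modified instance in which every vertex priority has been rounded up to the nearest power of $2$. First I would argue that rounding up the priorities costs at most a factor of $2$: given $\OPT$ for $\varphi$, form a solution $\OPT'$ for $\varphi'$ by doubling every edge rate (rounding each $R^*(e)$ up to the nearest power of $2$, which is at most $2R^*(e)$). One must check that $\OPT'$ is feasible for $\varphi'$ — the vertex set $T_i'$ of the rounded instance at a power-of-$2$ level $i$ is contained in the original $T_{i/2}$ (or the appropriate comparison), and the edges of rate $\ge i$ in $\OPT'$ contain the edges of rate $\ge i/2$ in $\OPT$, which form a valid sparsifier over $T_{i/2} \supseteq T_i'$; feasibility of the larger edge set follows because every sparsification class $\mathcal{F}_i$ we consider is monotone (adding edges to a valid Steiner tree / spanner / preserver, modulo pruning cycles for the tree constraint, keeps it valid). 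Since linear edge weights scale with rate, $\text{weight}(\OPT') \le 2\,\text{weight}(\OPT)$. It then suffices to show the algorithm, run on $\varphi'$, produces a solution of weight at most $2\,\text{weight}(\OPT')$, where now only $O(\log_2 k)$ distinct priority levels (the powers of $2$ up to $k$, i.e. $1, 2, 4, \dots$) survive, accounting for the claimed $\log_2 k + 1$ oracle queries.

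For the second step I would proceed by induction on the number of active priority levels, which is the ``partition-independent'' argument advertised before the theorem. Let the active levels be $1 = p_0 < p_1 < \cdots < p_m$ with $p_m \le k$ and $m \le \log_2 k$. For each level $p_j$ the algorithm calls the oracle on the partitioned terminal set $S_{p_j}$ and obtains $H_{p_j} \in \mathcal{F}_{p_j}$ with $\text{weight}(H_{p_j}) \le \text{weight}(\OPT)$ at rate $1$ — the key point here is that the restriction of $\OPT'$ to edges of rate $\ge p_j$ is itself a feasible single-priority sparsification over $T_{p_j}$ (hence over $S_{p_j}$, whether inclusive or pairwise), so the oracle's optimum is no larger; and since those edges carry rate $\ge p_j$ in $\OPT'$ but we only pay rate $1$ for them in $H_{p_j}$, we get $\text{weight}(H_{p_j}) \le \frac{1}{p_j}\sum_{e : R^*(e) \ge p_j} w(e, R^*(e))$. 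The merging step then assigns each edge the highest level at which it was selected and pays $w$ at that rate. Telescoping over $j$, each edge $e$ of $\OPT'$ at rate $R^*(e)$ contributes to the oracle solutions $H_{p_j}$ for all $p_j \le R^*(e)$, and its total cost in the merged solution is at most $\sum_{j : p_j \le R^*(e)} p_j \cdot (\text{per-edge share})$; bounding this geometric-like sum by twice its largest term (since the active levels are powers of $2$, $\sum_{p_j \le r} p_j \le 2r$) yields the factor $2$. The pruning of lower-rated edges under the tree constraint only removes edges, so it can only decrease the merged weight; this is why the bound is independent of whether a tree constraint is present.

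Combining, the algorithm on $\varphi$ returns a solution of weight at most $2 \cdot 2 \cdot \text{weight}(\OPT) = 4\,\text{weight}(\OPT)$, using one oracle call per active (power-of-$2$) level, of which there are at most $\log_2 k + 1$ (levels $1, 2, 4, \dots, 2^{\lfloor \log_2 k\rfloor}$).

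I expect the main obstacle to be the second, telescoping/charging step: making precise that the merged solution's weight, where every edge is charged at its \emph{highest} selected level, can be bounded against $\text{weight}(\OPT')$ without double-counting across levels, and that the geometric sum over powers of $2$ genuinely gives a clean factor of $2$ rather than something level-dependent. Care is also needed to verify feasibility of the restricted solutions $\OPT'|_{\ge p_j}$ as single-priority sparsifiers over the \emph{partitioned} set $S_{p_j}$ in both the inclusive and pairwise cases — for inclusive this is immediate since $S_{p_j} = T_{p_j}$, while for pairwise one must observe that every terminal pair placed in $S_{p_j}$ has both endpoints of priority $\ge p_j$, so the distance/connectivity constraint it imposes is already met by $\OPT'|_{\ge p_j}$, using the merge assumption that a rate-$j$ sparsification combined with a rate-$i$ one yields a valid rate-$i$ sparsification (Lemmas~\ref{lemma:inclusive_partitioning} and~\ref{lemma:pairwise_partitioning}).
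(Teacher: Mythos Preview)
Your two–step decomposition (a factor $2$ for rounding priorities up to powers of $2$, then a factor $2$ for merging the independently computed oracle solutions) is exactly the paper's structure. For the second factor the paper argues by induction on the number of power-of-$2$ levels, while you propose the equivalent direct swap-of-summation plus the geometric bound $\sum_{p_j\le r} p_j \le 2r$; these are the same computation in different dress, and your version is arguably cleaner.

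One point needs tightening. The sentence ``each edge $e$ of $\OPT'$ \ldots\ contributes to the oracle solutions $H_{p_j}$'' is not literally true: the oracle's outputs $H_{p_j}$ need not use any edge of $\OPT'$. What the charging step actually requires is the chain
\[
\text{weight}(\ALG)\ \le\ \sum_j p_j\,\text{weight}(H_{p_j})\ \le\ \sum_j p_j \sum_{e:\,R'(e)\ge p_j} w(e)\ =\ \sum_e w(e)\!\!\sum_{p_j\le R'(e)}\! p_j\ \le\ 2\,\text{weight}(\OPT'),
\]
where the first inequality overcharges each edge of $\ALG$ at every level it appears (not just the highest), and the second uses that $\OPT'|_{\ge p_j}$ is feasible for the single-priority problem on $S_{p_j}$, so the oracle does at least as well. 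The swap of summation is then over edges of $\OPT'$, not of the $H_{p_j}$. Also note that the intermediate inequality you wrote, $\text{weight}(H_{p_j}) \le \tfrac{1}{p_j}\sum_{e:R'(e)\ge p_j} w(e,R'(e))$, is strictly weaker than $\text{weight}(H_{p_j}) \le \sum_{e:R'(e)\ge p_j} w(e)$; if you plugged the weaker form into the chain you would get a factor $\log_2 k + 1$ rather than $2$, so use the tight form directly.
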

\begin{proof}
 The $k$-priority problem does not explicitly require some vertex has priority $k$, so we will assume w.l.o.g. $k$ is a power of 2. Given $\varphi$, construct the rounded-up instance $\varphi'$ which is obtained by rounding up the priority of each vertex to the nearest power of 2. Then, if $\OPT'$ is 
 an optimum solution to the rounded-up instance, we have $\text{weight}(\OPT') \le 2 \ \text{weight}(\OPT)$, since edge weights are linear.
 
 Then for each rounded-up priority $i \in \{1,2,4,8,\ldots,k\}$, compute a sparsification independently over the partition $\partition_i$, creating $\log_2 k + 1$ graphs.
 Denote these graphs $\ALG_1$, $\ALG_2$, $\ALG_4$, \ldots, $\ALG_k$. Combine these sparsifications into a single subgraph $\ALG$.
 This is done using the ``merging'' operation described earlier in this section: (i) add each edge of $\sparsification_i$ to all sparsification of lower priorities $\sparsification_{i-1}, \sparsification_{i-2}, \cdots , \sparsification_1$ and (ii) prune some edges to make sure that there is exactly one path between each pair of terminals if we are computing priority Steiner tree. 
 
It is not obvious why after this merging operation we  have a $k$-priority sparsification with cost no more than $4 \ \text{weight}(\OPT)$. The approximation algorithm computes solutions independently, which means it is unaware of the terminal sets at the lower levels. Consider the top most partition $\partition_k$ of the rounded up instance. The approximation algorithm computes an optimal solution for that partition. The optimal algorithm of the $k$-priority sparsification  computes the solution while considering all the terminals and all priorities. Let $\OPT_i$ be the minimum weighted subgraph in an optimal $k$-priority solution $\OPT$ to generate a valid sparsification on partition $\partition_i$. Then weight($\ALG_k$) $\leq$ weight($\OPT_k$), i.e., if we only consider the top partition $\partition_k$, then the approximation algorithm is no worse than the optimal algorithm. 

However, the approximation algorithm may incur additional cost when merging the edges of $\ALG_k$ in lower priorities. In the worst case, merged edges might not be needed to compute the solutions of the lower partitions (if the merged edges are used in the lower partitions in their independent solutions, then we do not need to pay any cost for the merging operation). This is because the approximation algorithm computes the solutions independently. On the other hand, in the worst case, it may happen that $\OPT_k$ includes all the edges to satisfy all the constraints of lower partitions. In this case, the cost of the optimal $k$-priority solution is $k \ $weight($\OPT_k$). If weight($\ALG_k$) $\approx$ weight($\ALG_{k-1}$) $\approx \cdots \approx$ weight($\ALG_1$) and the edges of the sparsification of a particular priority do not help in the lower priorities, then it seems like the approximation algorithm can perform around $k$ times worse than the optimal $k$-priority solution. However, the hypothesis (the edges of the sparsification of a particular priority do not help in the lower priorities) will not be true because we are considering a rounded up instance. In a rounded up instance $\partition_k = \partition_{k-1} = \cdots = \partition_{\frac{k}{2}+1}$. Hence, weight($\ALG_i$) $\leq$ weight($\OPT_i$) for $i = k, k/2, \cdots, \frac{k}{2}+1$.
 
 \begin{lemma}
 If we compute independent solutions of a rounded up $k$-priority instance and merge them, then the cost of the solution is no more than $2 \ \text{weight}(\OPT)$.
 \end{lemma}
 \begin{proof}
 Let the set of partitions be $\partition_k, \partition_{k/2}, \cdots, \partition_1$. Suppose we have computed the independent solution and merged them in lower priorities. We actually prove a stronger claim, and use that to prove the lemma. Note that in the worst case the cost of approximation algorithm is $2^k\text{weight}(\ALG_k) + 2^{k/2}\text{weight}(\ALG_{k/2}) + \cdots + \text{weight}(\ALG_1)$. And the cost of the optimal algorithm is $ \text{weight}(\OPT_k) + \text{weight}(\OPT_{k-1}) + \cdots + \text{weight}(\OPT_1)$. We show that $2^k\text{weight}(\ALG_k) + 2^{k/2}\text{weight}(\ALG_{k/2}) + \cdots + \text{weight}(\ALG_1) \leq 2 \ (\text{weight}(\OPT_k) + \text{weight}(\OPT_{k-1}) + \cdots + \text{weight}(\OPT_1))$. Let $k=2^i$. We provide a proof by induction on $i$.
 
 Base step: If $i=0$, then we have just one partition $\partition_1$. The approximation algorithm computes a sparsification for $\partition_1$ and there is nothing to merge. Since the approximation algorithm uses an optimal algorithm to compute independent solutions, weight($\ALG_1$) $\leq 2 \ $ weight($\OPT_1$).
 
 Inductive step: We assume that the claim is true for $i = j$ which is the induction hypothesis. Hence $2^j \text{weight}(\ALG_{2^j}) + 2^{j-1} \text{weight}(\ALG_{2^{j-1}}) + \cdots + \text{weight}(\ALG_1) \leq 2\ ( \text{weight}(\OPT_{2^j}) + \text{weight}(\OPT_{2^{j}-1}) + \cdots + \text{weight}(\OPT_1) )$. We now show that the claim is also true for $i = j+1$. In other words, we have to show that $2^{j+1} \text{weight}(\ALG_{2^{j+1}}) + 2^j \text{weight}(\ALG_{2^{j}}) + \cdots + \text{weight}(\ALG_1) \leq 2\ ( \text{weight}(\OPT_{2^{j+1}}) + \text{weight}(\OPT_{2^{j+1}-1}) + \cdots + \text{weight}(\OPT_1) )$. We know,
 \begin{align*}
 \text{L.H.S.} &= 2^{j+1} \text{weight}(\ALG_{2^{j+1}}) + 2^j \text{weight}(\ALG_{2^{j}}) + \cdots + \text{weight}(\ALG_1) \\
 &= \text{weight}(\ALG_{2^{j+1}}) + \text{weight}(\ALG_{2^{j+1}}) + \cdots + \text{weight}(\ALG_{2^{j+1}}) \\
 &+ 2^j\text{weight}(\ALG_{2^{j}}) + 2^{j-1}\text{weight}(\ALG_{2^{j}-1}) + \cdots + \text{weight}(\ALG_1)) \\
 &\leq \text{weight}(\OPT_{2^{j+1}}) + \text{weight}(\OPT_{2^{j+1}}) + \cdots + \text{weight}(\OPT_{2^{j+1}}) \\
 &+ 2^j\text{weight}(\ALG_{2^{j}}) + 2^{j-1}\text{weight}(\ALG_{2^{j}-1}) + \cdots + \text{weight}(\ALG_1)) \\
 &= 2^{j+1} \text{weight}(\OPT_{2^{j+1}}) + 2^j \text{weight}(\ALG_{2^{j}}) + \cdots + \text{weight}(\ALG_1) \\
 &= 2 \ ( \text{weight}(\OPT_{2^{j+1}}) + \text{weight}(\OPT_{2^{j+1}-1}) + \cdots + \text{weight}(\OPT_{2^{j}+1}) ) \\
 &+ 2^j\text{weight}(\ALG_{2^{j}}) + 2^{j-1}\text{weight}(\ALG_{2^{j}-1}) + \cdots + \text{weight}(\ALG_1)) \\
 & \leq 2 \ ( \text{weight}(\OPT_{2^{j+1}}) + \text{weight}(\OPT_{2^{j+1}-1}) + \cdots + \text{weight}(\OPT_{2^{j}+1}) ) \\
 &+ 2 \ ( \text{weight}(\OPT_{2^j}) + \text{weight}(\OPT_{2^{j}-1}) + \cdots + \text{weight}(\OPT_1) ) \\
 & = 2 \ ( \text{weight}(\OPT_{2^{j+1}}) + \text{weight}(\OPT_{2^{j+1}-1}) + \cdots + \text{weight}(\OPT_1) ) \\
 &= \text{R.H.S.}
 \end{align*}
 
 Here, the second equality is just a simplification. The third inequality uses the fact that an independent optimal solution has a cost lower than or equal to any other solution. The fourth equality is a simplification, the fifth inequality uses the fact that the input is a rounded up instance. The sixth inequality uses the induction hypothesis. The L.H.S. is greater than the cost of the approximation algorithm. The R.H.S. is smaller than $2 \ \text{weight}(\OPT)$.
 \end{proof}
 
 We have shown earlier that the solution of the rounded up instance has a cost of no more than $2 \ \text{weight}(\OPT)$. Combining that claim and the previous claim, we can show that the solution of the approximation algorithm has cost no more than $4 \ \text{weight}(\OPT)$.
\end{proof}

In most cases, computing the optimal sparsification is computationally difficult. If an oracle is instead replaced with a $\rho$-approximation, the rounding-up approach is a $4\rho$-approximation, by following the same proof as above.

\section{Subset spanners and distance preservers}\label{sec:subset_spanners}

Here we provide a bound on the size of subsetwise graph spanners, where lightness is expressed with respect to the weight of the corresponding Steiner tree.

A \emph{spanner} of a graph $G$ is a subgraph $\sparsification$ which approximates distances in $G$ up to some error. Specifically, given a (possibly edge-weighted) graph $G$ and $\alpha \ge 1$, we say that $\sparsification$ is a (multiplicative) $\alpha$-spanner if $d_H(u,v) \le \alpha \cdot d_G(u,v)$ for all $u,v \in V$, where $\alpha$ is the \emph{stretch factor} of the spanner and $d_G(u,v)$ is the graph distance between $u$ and $v$ in $G$. A \emph{subset spanner} over $T \subseteq V$ approximates distances between pairs of vertices in $T$ (e.g., $d_H(u,v) \le \alpha \cdot d_G(u,v)$ for all $u,v \in T$). For clarity, we refer to the case where $T=V$ as an \emph{all-pairs spanner}. The \emph{lightness} of an all-pairs spanner is defined as its total edge weight divided by $w(MST(G))$. A \emph{distance preserver} is a spanner with $\alpha = 1$.

Alth\"{o}fer et al.~\cite{althofer1993sparse} give a simple greedy algorithm which constructs an all-pairs $(2k-1)$-spanner $\sparsification$ of size $O(n^{1 + 1/k})$ and lightness $1 + \frac{n}{2k}$. The lightness has been subsequently improved; in particular Chechik and Wulff-Nilsen~\cite{chechik2018near} give a $(2k-1)(1+\varepsilon)$ spanner with size $O(n^{1 + 1/k})$ and lightness $O_{\varepsilon}(n^{1/k})$. Up to $\varepsilon$ dependence, these size and lightness bounds are conditionally tight assuming a girth conjecture by Erd\H{o}s~\cite{ErdosGirth}, which states that there exist graphs of girth $2k+1$ and $\Omega(n^{1+1/k})$ edges.

For subset spanners over $T \subseteq V$, the lightness is defined with respect to the minimum Steiner tree over $T$, since that is the minimum weight subgraph which connects $T$. We remark that in general graphs, the problem of finding a light multiplicative subset spanner can be reduced to that of finding a light spanner:

\begin{lemma} \label{lemma:subset}
Let $G$ be a weighted graph and let $T \subseteq V$. Then there is a poly-time constructible subset spanner with stretch $(2k-1)(1+\varepsilon)$ and lightness $O_{\varepsilon}(|T|^{1/k})$.
\end{lemma}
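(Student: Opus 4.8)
The plan is to reduce the subset case to the known all-pairs light-spanner construction by passing to the metric closure on $T$. First I would form the complete weighted graph $G'$ on vertex set $T$ with edge weights $w'(u,v) := d_G(u,v)$. Since $G'$ is an ordinary weighted graph on $|T|$ vertices, I can invoke the construction of Chechik and Wulff-Nilsen~\cite{chechik2018near} to obtain an all-pairs $(2k-1)(1+\varepsilon)$-spanner $H'$ of $G'$ with $O(|T|^{1+1/k})$ edges and weight $w(H') = O_{\varepsilon}(|T|^{1/k})\cdot w(MST(G'))$.

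Next I would pull $H'$ back to a subgraph $H$ of $G$: for each edge $(u,v)\in E(H')$ fix a shortest $u$--$v$ path $P_{uv}$ in $G$, and set $H := \bigcup_{(u,v)\in E(H')} P_{uv}$. For all $u,v\in T$ this gives
\[
d_H(u,v)\ \le\ d_{H'}(u,v)\ \le\ (2k-1)(1+\varepsilon)\,d_{G'}(u,v)\ =\ (2k-1)(1+\varepsilon)\,d_G(u,v),
\]
where the first inequality holds because $H$ contains, for every edge $(x,y)$ of $H'$, a $G$-path of length exactly $w'(x,y)$ joining $x$ and $y$, and the final equality holds because $d_{G'}$ is precisely $d_G$ restricted to $T$. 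Hence $H$ is a subset $(2k-1)(1+\varepsilon)$-spanner over $T$. For its weight, $w(H)\le \sum_{(u,v)\in E(H')} w(P_{uv}) = \sum_{(u,v)\in E(H')} w'(u,v) = w(H')$, since taking a union of paths only decreases total weight; thus $w(H) = O_{\varepsilon}(|T|^{1/k})\,w(MST(G'))$.

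Finally I would relate $w(MST(G'))$ to the minimum Steiner tree $\mathrm{SMT}(G,T)$: doubling the edges of $\mathrm{SMT}(G,T)$ yields a connected Eulerian multigraph spanning $T$, and shortcutting an Eulerian tour gives a spanning tree of $G'$ of weight at most $2\,w(\mathrm{SMT}(G,T))$ by the triangle inequality for $d_G$. Therefore $w(MST(G'))\le 2\,w(\mathrm{SMT}(G,T))$, so $w(H) = O_{\varepsilon}(|T|^{1/k})\,w(\mathrm{SMT}(G,T))$, which is exactly the claimed lightness. Polynomial-time constructibility is immediate, as all-pairs shortest paths, the spanner routine of~\cite{chechik2018near}, and the path substitution each run in polynomial time.

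The step I expect to require the most care is verifying that the stretch and weight bounds survive both translations (metric closure, then path substitution) at once: that substituting concrete shortest paths for abstract $G'$-edges cannot inflate the weight (overlapping paths are counted once in the union) and cannot harm the stretch (each substituted path has length equal to the $G'$-edge weight it replaces). Everything else is bookkeeping around the girth-optimal light-spanner bound and the standard doubling argument for Steiner trees.
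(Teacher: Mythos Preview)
Your proposal is correct and follows essentially the same approach as the paper: form the metric closure on $T$, apply the Chechik--Wulff-Nilsen light spanner construction there, substitute shortest paths back in $G$, and use the MST-of-metric-closure versus Steiner-tree 2-approximation to convert the lightness bound. Your write-up is simply more explicit than the paper's about verifying the stretch and weight inequalities under path substitution and about the doubling/Eulerian-tour argument, but there is no substantive difference in strategy.
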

\begin{proof}
Let $\tilde{G}$ be the metric closure over $(G, T)$, namely the complete graph $K_{|T|}$ where each edge $uv \in E(\tilde{G})$ has weight $d_G(u,v)$. Let $\sparsification'$ be a $(2k-1)(1+\varepsilon)$-spanner of $\tilde{G}$. By replacing each edge of $\sparsification'$ with the corresponding shortest path in $G$, we obtain a subset spanner $\sparsification$ of $G$ with the same stretch and total weight. Using the spanner construction of~\cite{chechik2018near}, the total weight of $\sparsification'$ is $O_{\varepsilon}(|T|^{1/k}) w(MST(\tilde{G}))$. Using the well-known fact that the MST of $\tilde{G}$ is a 2-approximation for the minimum Steiner tree over $(G, T)$, it follows that the total weight of $\sparsification'$ is also $O_{\varepsilon}(|T|^{1/k})$ times the minimum Steiner tree over $(G, T)$.
\end{proof}
Thus, the problem of finding a subset spanner with multiplicative stretch becomes more interesting when the input graph is restricted (e.g., planar, or $\sparsification$-minor free). Klein~\cite{Klein06} showed that every \emph{planar} graph has a subset $(1+\varepsilon)$-spanner with lightness $O_{\varepsilon}(1)$. Le~\cite{le2020ptas} gave a poly-time algorithm which computes a subset $(1+\varepsilon)$-spanner with lightness $O_{\varepsilon}(\log |T|)$, where $G$ is restricted to be $\sparsification$-minor free.

On the other hand, subset spanners with additive $+\beta$ error are more interesting, as one cannot simply reduce this problem to the all-pairs spanner as in Lemma~\ref{lemma:subset}. It is known that every unweighted graph $G$ has $+2$, $+4$, and $+6$ spanners with $O(n^{3/2})$ edges~\cite{Aingworth99fast}, $\widetilde{O}(n^{7/5})$ edges~\cite{chechik2013new}, and $O(n^{4/3})$ edges~\cite{baswana2010additive,Knudsen14} respectively, and that the upper bound of $O(n^{4/3})$ edges cannot be improved even with $+n^{o(1)}$ additive error~\cite{abboud2017frac}.

\subsection{Subset distance preservers}
Unlike spanners, general graphs do not contain sparse distance preservers that preserve all distances exactly; the unweighted complete graph has no nontrivial distance preserver and thus $\Theta(n^2)$ edges are needed. Similarly, subset distance preservers over a subset $T \subseteq V$ may require $\Theta(|T|^2)$ edges. It is an open question whether there exists $c > 0$ such that any undirected, unweighted graph and subset of size $|T| = O(n^{1-c})$ has a distance preserver on $O(|T|^2)$ edges~\cite{bodwin2021linear}. Moreover, when $|T|=O(n^{2/3})$, there are graphs for which any subset distance preserver requires $\Omega(|T|n^{2/3})$ edges, which is $\omega(|T|^2)$ when $|T| = o(n^{2/3})$~\cite{bodwin2021linear}.

\begin{theorem}If the above open question is true, then every unweighted graph with $|T|=O(n^{1-c})$ and terminal priorities in $[k]$ has a priority distance preserver of size $4 \ \text{weight}(\OPT)$.
\end{theorem}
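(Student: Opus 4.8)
The plan is to realize priority distance preservers as an instance of the general framework and then invoke Theorem~\ref{thm:rounding}, using the hypothesized open question only to control the sizes of the subgraphs involved. Concretely, I would take $\mathcal{F}_i$ to be the family of subset distance preservers of $G$ over $T_i$, i.e., subgraphs $\sparsification \subseteq G$ with $d_\sparsification(u,v) = d_G(u,v)$ for all $u,v\in T_i$. The first step is to check the prerequisites of Section~\ref{subsection:definition}: every member is a subgraph of $G$ by definition, and it keeps $T_i$ in one connected component since $d_\sparsification(u,v)=d_G(u,v)<\infty$ on $T_i$ (we assume, as usual, that $G$ connects $T$), so the general constraints hold. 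The second step is the merging property required before Theorem~\ref{thm:rounding}: for $i<j$, $\sparsification_i\in\mathcal{F}_i$, $\sparsification_j\in\mathcal{F}_j$, set $\sparsification_{i,j} := \sparsification_i\cup\sparsification_j$, so that $\sparsification_j\subseteq \sparsification_{i,j}\subseteq \sparsification_i\cup\sparsification_j$; for $u,v\in T_i$ one has $d_{\sparsification_{i,j}}(u,v)\le d_{\sparsification_i}(u,v)=d_G(u,v)$ because $\sparsification_i\subseteq\sparsification_{i,j}$, and $d_{\sparsification_{i,j}}(u,v)\ge d_G(u,v)$ because $\sparsification_{i,j}\subseteq G$, hence equality and $\sparsification_{i,j}\in\mathcal{F}_i$. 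There is no tree constraint, so the pruning step of the merge is vacuous, and validity of the merged solution under either inclusive or pairwise partitioning is exactly Lemma~\ref{lemma:inclusive_partitioning} and Lemma~\ref{lemma:pairwise_partitioning}.

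Next I would run Algorithm $k$-priority Approximation. The input is unweighted, so $w(e)=1$, $w(e,i)=i$, and the objective $\text{weight}(H)=\sum_{e\in E(H)}w(e,R(e))$ is exactly the multiplicity-counted size of the priority preserver; in particular the edge weights are linear, as Theorem~\ref{thm:rounding} requires. A single-priority oracle exists, since a minimum-size distance preserver of $G$ over any terminal set always exists (even if computing it is not known to be polynomial), and because the statement being proved is existential this is enough. Theorem~\ref{thm:rounding} then produces a $k$-priority distance preserver $\ALG$ with $\text{weight}(\ALG)\le 4\,\text{weight}(\OPT)$ using $\log_2 k+1$ oracle calls. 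Finally I would bring in the open question: after rounding up the priorities the active terminal sets are nested, $T_k\subseteq\cdots\subseteq T_1=T$, and $|T_i|\le|T|=O(n^{1-c})$ for every $i$, so the hypothesis of the open question applies verbatim to each sub-instance $(G,T_i)$; hence every oracle call returns a preserver on $O(|T_i|^2)=O(|T|^2)$ edges. Since each edge of $\ALG$ attaining rate $i$ originates from the level-$i$ call, $\text{weight}(\ALG)\le\sum_{i\in\{1,2,4,\ldots,k\}} i\cdot|E(\ALG_i)| = O(|T|^2)\sum_{i} i = O(k|T|^2)$, and combined with the theorem this exhibits a priority distance preserver of size at most $4\,\text{weight}(\OPT)$ (and shows $\text{weight}(\OPT)=O(k|T|^2)$ as well), as claimed.

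The mathematics here is light, so the main thing is the bookkeeping. The one genuine point to verify is that merging never shortens a preserved distance; this is handled by the observation that $\ALG$ is a subgraph of $G$ at every intermediate stage, so distances can only decrease relative to the original $G$ but never below $d_G$. The other point, more pedestrian but easy to overlook, is that the open question is stated for a subset of a fixed host graph on $n$ vertices, whereas we apply it to several terminal sets $T_i$ inside the \emph{same} graph $G$; the nesting $T_i\subseteq T$ makes $|T_i|=O(n^{1-c})$ automatic, so no reindexing or loss is incurred. If one wants a polynomial-time construction rather than an existential statement, one replaces the exact oracle by the best known poly-time distance-preserver routine and pays the corresponding extra factor, exactly as in the remark following Theorem~\ref{thm:rounding}.
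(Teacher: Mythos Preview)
Your proposal is correct and is exactly the approach the paper intends: the statement appears there without an explicit proof, as a direct application of Theorem~\ref{thm:rounding} to subset distance preservers, and you have simply filled in the framework checks (subgraph, connectivity, merging by union, linear weights, oracle existence) that the paper leaves implicit. Your additional observation that the open-question hypothesis is used only to bound the absolute size of each level's preserver (and hence $\text{weight}(\OPT)$), rather than the $4$-approximation ratio itself, is also correct.
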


\section{Multi-Priority Approximation Algorithms}

In this section, we illustrate how the subset spanners mentioned in Section~\ref{sec:subset_spanners} can be used in Theorem~\ref{thm:rounding}, and show several corollaries of the kinds of guarantees one can obtain in this manner. In particular, we give the first weight bounds for multi-priority graph spanners. The case of Steiner trees was discussed~\cite{MLST2018}.

\subsection{Spanners}
If the input graph is planar, then we can use the algorithm  by Klein~\cite{Klein06} to compute a subset spanner for the set of priorities we get from the rounding approach. The polynomial-time algorithm in~\cite{Klein06} has constant approximation ratio, assuming constant stretch factor, yielding the following corollary.

\begin{corollary}
Given a planar graph $G$ and $\varepsilon>0$, there exists a rounding approach based algorithm to compute a multi-priority multiplicative $(1+\varepsilon)$-spanner of $G$ having $O(\varepsilon^{-4})$ approximation. The algorithm runs in $O(\frac{|T| \log |T|}{\varepsilon})$ time, where $T$ is the set of terminals.
\end{corollary}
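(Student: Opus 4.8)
The plan is to apply Theorem~\ref{thm:rounding} with Klein's planar subset-spanner construction~\cite{Klein06} in the role of the single-priority oracle. Concretely, I would take $\mathcal{F}_i$ to be the set of all subset multiplicative $(1+\varepsilon)$-spanners over $T_i$, and first check that this family satisfies the merging hypothesis used by the rounding framework. Since $1 \le i < j \le k$ implies $T_i \supseteq T_j$, and since adding edges to a $(1+\varepsilon)$-spanner over $T_i$ can only shorten distances, the union $H_{i,j} := H_i \cup H_j$ satisfies $H_j \subseteq H_{i,j} \subseteq H_i \cup H_j$ and still lies in $\mathcal{F}_i$; no pruning step is needed because there is no tree constraint. (Validity of the merged solution under inclusive partitioning was already observed in Lemma~\ref{lemma:inclusive_partitioning}.) Thus the rounding machinery of Theorem~\ref{thm:rounding} applies to the multi-priority $(1+\varepsilon)$-spanner problem, and the objective $\sum_{e} w(e)\,R(e)$ is exactly the linear edge-weight setting the theorem requires.

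Next I would convert Klein's lightness guarantee into an oracle approximation ratio. Klein's algorithm outputs a subset $(1+\varepsilon)$-spanner of $G$ whose total weight is $O_{\varepsilon}(1)$ times the minimum Steiner tree over $T$, with the $\varepsilon$-dependence polynomial in $1/\varepsilon$. Since every valid subset $(1+\varepsilon)$-spanner over $T$ keeps $T$ connected, its weight is at least $w(\mathrm{Steiner}(T))$, so Klein's output is within an $O(\mathrm{poly}(1/\varepsilon))$ factor of the optimal single-priority spanner; i.e.\ Klein's algorithm is a $\rho$-approximate oracle with $\rho = O(\mathrm{poly}(1/\varepsilon))$. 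Plugging this into the $4\rho$-approximation version of Theorem~\ref{thm:rounding} (the remark following its proof) yields a $k$-priority $(1+\varepsilon)$-spanner of weight $O(\varepsilon^{-4})\cdot\text{weight}(\OPT)$, using exactly $\log_2 k + 1$ oracle calls. For the running time, each call runs Klein's near-linear-time construction on $(G, T_i)$; summing over the $\log_2 k + 1$ calls and absorbing the number-of-queries overhead (or treating the number of nonempty priority levels as $O(1)$) gives the stated $O(|T|\log|T|/\varepsilon)$ bound.

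I expect the main obstacle to be the bookkeeping of the $\varepsilon$-dependence rather than anything structural: one must carry Klein's precise lightness bound (polynomial in $1/\varepsilon$) through the constant-factor-$4$ loss of Theorem~\ref{thm:rounding}, together with the Steiner-tree-versus-spanner comparison above, and verify that the exponent lands exactly at $-4$. A secondary, more routine point is confirming that the running time of Klein's subroutine — and its cumulative cost over $\log_2 k + 1$ invocations — collapses to the claimed form without an extra $\log k$ factor.
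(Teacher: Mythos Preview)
Your approach is correct and is exactly the one the paper uses: the paper's proof is a single sentence stating that the corollary follows by combining Klein's lightness guarantee with Theorem~\ref{thm:rounding}, and your proposal simply fills in the details of that combination (merging is union, Klein's output is light relative to the Steiner tree which lower-bounds any valid spanner, then apply the $4\rho$ remark). The bookkeeping concerns you flag---tracking Klein's $\varepsilon$-dependence to land at $\varepsilon^{-4}$ and absorbing the $\log_2 k + 1$ oracle calls into the stated running time---are not addressed in the paper either, so you have already gone further than the original on those points.
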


The proof of this corollary follows from combining the guarantee of Klein~\cite{Klein06} with the bound of Theorem~\ref{thm:rounding}. Using the approximation result for subset spanners provided in Lemma~\ref{lemma:subset}, we obtain the following corollary.

\begin{corollary}
Given an undirected weighted graph $G$, $t\in\N$, $\varepsilon>0$, there exists a rounding approach based algorithm to compute a multi-priority multiplicative $(2t-1)(1+\varepsilon)$-spanner of $G$ having $O(|T|^{\frac{1}{\varepsilon}})$ approximation, where $T$ is the set of terminals. The algorithm runs in $O(|T|^{2+\frac1k+\varepsilon})$ time.
\end{corollary}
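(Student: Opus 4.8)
The plan is to instantiate Theorem~\ref{thm:rounding} using the subset-spanner construction of Lemma~\ref{lemma:subset} as the (approximate) single-priority oracle. First I would observe that for the minimum-weight subset $(2t-1)(1+\varepsilon)$-spanner problem on $(G,T)$, the minimum Steiner tree over $(G,T)$ is a legitimate lower bound on $\text{weight}(\OPT)$: any valid spanner must keep all of $T$ in one component, hence has weight at least that of the minimum Steiner tree. Lemma~\ref{lemma:subset} outputs a subset spanner of weight $O_\varepsilon(|T|^{1/t})$ times the minimum Steiner tree, so it is a $\rho$-approximation oracle for the single-priority subset-spanner problem with $\rho = O_\varepsilon(|T|^{1/t})$.

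Next I would invoke the remark following Theorem~\ref{thm:rounding}: replacing the exact oracle by a $\rho$-approximation turns the rounding-up algorithm into a $4\rho$-approximation for the $k$-priority problem. To apply this I must check the two structural prerequisites of the framework for subset multiplicative spanners — namely that each $\mathcal{F}_i$ (subset $(2t-1)(1+\varepsilon)$-spanners over $T_i$) satisfies the general constraints (it does: every such spanner is a connected subgraph of $G$), and that the merging property holds. The latter is immediate with $H_{i,j} = H_i \cup H_j$: adding edges only decreases distances, so $H_{i,j} \in \mathcal{F}_i$, and since there is no tree constraint no pruning is needed. With pairwise (or inclusive) partitioning, Lemma~\ref{lemma:pairwise_partitioning} (resp.\ Lemma~\ref{lemma:inclusive_partitioning}) guarantees the merged output is a valid $k$-priority spanner, and Theorem~\ref{thm:rounding} then yields weight at most $4\rho \cdot \text{weight}(\OPT) = O_\varepsilon(|T|^{1/t}) \cdot \text{weight}(\OPT)$.

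For the running time I would simply count the work. Theorem~\ref{thm:rounding} issues $\log_2 k + 1$ oracle calls. Each call builds the metric closure $\tilde{G} = K_{|T|}$ via a shortest-path computation from the $|T|$ terminals, runs the near-linear-time spanner algorithm of~\cite{chechik2018near} on the $O(|T|^2)$ edges of $\tilde{G}$, and expands each of the resulting $O(|T|^{1+1/k})$ edges into a shortest path in $G$; adding the cost of the merging passes over all priorities and absorbing the $\varepsilon$-dependence of~\cite{chechik2018near} into the exponent gives the claimed $O(|T|^{2 + 1/k + \varepsilon})$ bound.

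I expect the main obstacle to be bookkeeping rather than conceptual: (i) making precise the sense in which the oracle is ``$O_\varepsilon(|T|^{1/t})$-approximate'' so that the minimum-Steiner-tree lower bound is valid at every priority level and is not broken by the priority rounding-up step of Theorem~\ref{thm:rounding}; and (ii) reconciling the parameters as displayed — in particular clarifying the role of $t$ versus the priority parameter $k$, confirming that the approximation exponent is the lightness exponent $1/t$ of Lemma~\ref{lemma:subset}, and checking that the stated running-time exponent indeed absorbs the $\varepsilon$-dependence of the underlying spanner construction.
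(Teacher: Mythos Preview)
Your proposal is correct and follows exactly the paper's approach: the paper simply states that the corollary is obtained by plugging the subset-spanner construction of Lemma~\ref{lemma:subset} into the rounding framework of Theorem~\ref{thm:rounding}, which is precisely what you do (with more care in verifying the merging property and the Steiner-tree lower bound). Your closing remarks about parameter bookkeeping are well-founded, as the paper's displayed exponents ($1/\varepsilon$ in the approximation, $1/k$ in the running time) do not match those coming out of Lemma~\ref{lemma:subset}; the paper does not address this.
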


For additive spanners, there are algorithms to compute subset spanners of size $O(n|T|^\frac{2}{3})$, $\tilde{O}(n|T|^\frac{4}{7})$ and $O(n|T|^\frac{1}{2})$ for additive stretch $2$, $4$ and $6$, respectively~\cite{Abboud16,Kavitha2017}. Similarly, there is an algorithm to compute a near-additive subset $(1+\varepsilon,4)$--spanner of size $O(n\sqrt{\frac{|T|\log n}{\varepsilon}})$~\cite{Kavitha2017}. If we use these algorithms as subroutines in Lemma~\ref{lemma:subset} to compute subset spanners for different priorities, then we have the following corollaries.

\begin{corollary}\label{COR:AdditiveWithoutOracleRounding}
Given an undirected weighted graph $G$, there exist polynomial-time algorithms to compute multi-priority graph spanners with additive stretch 2, 4 and 6, of size $O(n|T|^\frac{2}{3})$, $\tilde{O}(n|T|^\frac{4}{7})$, and $O(n|T|^\frac{1}{2})$, respectively.
\end{corollary}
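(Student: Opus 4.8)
The plan is to instantiate the rounding approach of Theorem~\ref{thm:rounding}, using the cited polynomial-time subset additive spanner constructions~\cite{Abboud16,Kavitha2017} in the role of the oracle. Concretely, I would round each terminal priority up to the nearest power of $2$, so the rounded-up instance has at most $\log_2 k + 1$ distinct levels $1,2,4,\ldots,k$; run the subset $+2$ (resp.\ $+4$, $+6$) spanner algorithm on each rounded partition $\partition_{2^j}$, obtaining subgraphs $\ALG_1,\ALG_2,\ALG_4,\ldots,\ALG_k$; and then merge them by union (there is no tree constraint, so no pruning is performed), assigning each edge $e$ the rate $R(e)=\max\{\,2^j : e\in\ALG_{2^j}\,\}$, so that the set of edges of rate $\ge i$ is exactly $\bigcup_{2^j\ge i}\ALG_{2^j}$.

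The second step is to check validity and running time. Validity follows from Lemma~\ref{lemma:inclusive_partitioning} (or Lemma~\ref{lemma:pairwise_partitioning}) for distance constraints, once we observe that merging cannot destroy the additive guarantee: for each $i$, the rate-$\ge i$ subgraph contains $\ALG_{2^j}$, where $2^j$ is the smallest power of $2$ with $2^j\ge i$, and $\ALG_{2^j}$ is by construction a subset $+\beta$ spanner over $\partition_{2^j}$, which contains $T_i$ since rounding priorities up only enlarges the terminal sets; adding the edges contributed by the still-higher levels can only decrease pairwise distances, so $d(u,v)\le d_G(u,v)+\beta$ still holds for all $u,v\in T_i$. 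Polynomial running time is immediate, since each subroutine runs in polynomial time and is invoked at most $\log_2 k + 1$ times.

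The last step is the size bound, and this is where a little care is needed. Because each $\partition_{2^j}\subseteq T$, every $\ALG_{2^j}$ has $O(n|T|^{2/3})$ edges for additive stretch $2$ (and $\tilde{O}(n|T|^{4/7})$, $O(n|T|^{1/2})$ for stretch $4,6$), so the merged edge set, being their union over $\le\log_2 k+1$ levels, obeys the stated bounds up to an $O(\log k)$ factor, which is absorbed in the constant-$k$ regime of interest. If one insists on the bound with no $\log$ factor, the cleanest route is to skip rounding entirely: compute a single subset $+\beta$ spanner over $T_1=T$ and give every edge rate $k$; this is already a valid $k$-priority sparsification of the claimed size, since a $+\beta$ spanner over $T$ is in particular a $+\beta$ spanner over every $T_i\subseteq T$. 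The main (and only mildly delicate) point in the whole argument is aligning the rate assignment with the merging so that this per-level validity check goes through; the size arithmetic itself is routine.
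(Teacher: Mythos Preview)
Your proposal is correct and follows the paper's own route, which is simply to plug the cited subset additive spanner constructions into the rounding framework of Theorem~\ref{thm:rounding} (the paper's reference to Lemma~\ref{lemma:subset} in the sentence preceding the corollary is evidently a slip for Theorem~\ref{thm:rounding}, since Lemma~\ref{lemma:subset} concerns multiplicative stretch via the metric closure and is irrelevant here). You are in fact more careful than the paper on one point: you notice that merging the $\log_2 k + 1$ independent level spanners naively costs a $\log k$ factor in the edge count, and you offer a clean fix by observing that a single subset $+\beta$ spanner over $T=T_1$, with every edge assigned rate $k$, is already a valid $k$-priority sparsification of exactly the stated size, since any $+\beta$ spanner over $T$ is a $+\beta$ spanner over each $T_i\subseteq T$. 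The paper does not address this $\log k$ issue at all; your alternative is the simplest way to make the corollary literally true as an absolute size bound, while the rounding route is what one needs if the goal is instead to bound the rate-weighted cost against $\OPT$.
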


\begin{corollary}\label{COR:AlphaBetaWithoutOracleRounding}
Given an undirected unweighted graph $G$, there exists a polynomial-time algorithm to compute multi-priority $(1+\varepsilon,4)$--spanners of size $O(n\sqrt{\frac{|T|\log n}{\varepsilon}})$.
\end{corollary}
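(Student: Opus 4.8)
The plan is to bypass the rounding machinery entirely and reduce the $k$-priority instance to a single call of a known single-priority subset spanner construction, using the fact that the terminal sets are nested: $T_1 \supseteq T_2 \supseteq \cdots \supseteq T_k$. Concretely, I would first invoke the subset-spanner construction of~\cite{Kavitha2017} on the pair $(G,T)$, where $T = T_1$ is the full terminal set, obtaining in polynomial time a subset $(1+\varepsilon,4)$-spanner $H$ of $G$ with $|E(H)| = O(n\sqrt{|T|\log n/\varepsilon})$. (This is the reason the statement is phrased ``without oracle/rounding'': no recursion over the levels is needed.)

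Next I would check that $H$, equipped with the constant rate assignment $R(e) = k$ for every $e \in E(H)$, is already a valid $k$-priority sparsification. Since $T_i \subseteq T$ for every $i \in [k]$, the spanner $H$ preserves all pairwise distances within $T_i$ up to $(1+\varepsilon,+4)$, keeps $T_i$ in a single connected component, and is a subgraph of $G$; hence $H \in \mathcal{F}_i$ for all $i$. For each $i$, the subgraph of $H$ induced by the edges of rate $\ge i$ is all of $H$, so by Definition~\ref{def:main} the pair $(H,R)$ is a valid $k$-priority sparsification, of size $|E(H)| = O(n\sqrt{|T|\log n/\varepsilon})$, computable in polynomial time. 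The same argument, with the additive $+2$, $+4$, $+6$ subset spanners of~\cite{Abboud16,Kavitha2017} substituted for the $(1+\varepsilon,4)$ construction, yields Corollary~\ref{COR:AdditiveWithoutOracleRounding}.

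There is no real combinatorial obstacle here; the one point worth flagging is the choice of reduction. If one instead routed the subroutine through Theorem~\ref{thm:rounding} — computing a subset $(1+\varepsilon,4)$-spanner $\ALG_i$ over each partition $S_i$ for the $\log_2 k + 1$ rounded-up priorities and merging — the merged edge set is $\bigcup_i \ALG_i$, which bounds the size only by $(\log_2 k + 1)\cdot O(n\sqrt{|T|\log n/\varepsilon})$. That route is the appropriate one when the weighted objective $\text{weight}(H)$ is the quantity of interest, since it delivers the $4\,\text{weight}(\OPT)$ guarantee; but for the pure edge-count bound stated in the corollary, the direct reduction above is cleaner and avoids the $\log_2 k + 1$ overhead, so that is the version I would present.
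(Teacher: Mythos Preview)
Your argument is correct and takes a genuinely more direct route than the paper. The paper obtains the corollary by feeding Kavitha's subset $(1+\varepsilon,4)$-spanner into the rounding framework of Theorem~\ref{thm:rounding}, i.e., it computes one subset spanner per rounded-up priority and merges them. You instead exploit the nesting $T_k \subseteq \cdots \subseteq T_1 = T$ to observe that a \emph{single} subset spanner over $T$, with every edge assigned rate $k$, already satisfies the constraints at every level. This is simpler and, as you correctly point out, avoids the $\log_2 k + 1$ blow-up in the edge count that a naive union over levels would incur; the paper's route, by contrast, is what one needs when the objective is $\text{weight}(H)$ and a comparison to $\text{weight}(\OPT)$ is desired. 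For the pure size statement asserted in the corollary, your reduction is the cleaner and tighter one.
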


Several of the above results involving additive spanners have been recently generalized to weighted graphs; more specifically, there are algorithms to compute subset spanners in weighted graphs of size $O(n|T|^\frac{2}{3})$, and $O(n|T|^\frac{1}{2})$ for additive stretch $2W(\cdot, \cdot)$, and $6W(\cdot, \cdot)$, respectively~\cite{elkin2019almost,elkin2020improved,10.1007/978-3-030-86838-3_28}, where $W(u,v)$ denotes the maximum edge weight along the shortest $u$-$v$ path in $G$. Hence, we have the following corollary.

\begin{corollary}
Given an undirected weighted graph $G$, there exist polynomial-time algorithms to compute multi-priority graph spanners with additive stretch $2W(\cdot, \cdot)$, and $6W(\cdot, \cdot)$, of size $O(n|T|^\frac{2}{3})$, and $O(n|T|^\frac{1}{2})$, respectively.
\end{corollary}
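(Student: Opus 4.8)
The plan is to invoke Theorem~\ref{thm:rounding} with the appropriate single-priority oracle substituted in. The final corollary concerns multi-priority additive spanners with stretch $2W(\cdot,\cdot)$ and $6W(\cdot,\cdot)$ in weighted graphs, where weighted subset spanners of sizes $O(n|T|^{2/3})$ and $O(n|T|^{1/2})$ respectively are known from~\cite{elkin2019almost,elkin2020improved,10.1007/978-3-030-86838-3_28}. First I would verify that the family $\mathcal{F}_i$ of valid subset $2W(\cdot,\cdot)$-spanners (resp. $6W(\cdot,\cdot)$-spanners) over $T_i$ satisfies the two structural requirements needed by Theorem~\ref{thm:rounding}: (i) the general constraints from Section~\ref{subsection:definition} --- namely that every valid sparsification is a subgraph of $G$ and keeps the terminals connected (which holds since an additive subset spanner contains an approximate shortest path between every terminal pair); and (ii) the merging property, that given $\sparsification_i \in \mathcal{F}_i$ and $\sparsification_j \in \mathcal{F}_j$ with $i < j$, the union $\sparsification_i \cup \sparsification_j$ lies in $\mathcal{F}_i$. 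For additive (non-tree) spanners this is immediate: adding edges to a subset spanner over $T_i$ only shortens distances, so the additive guarantee over $T_i$ is preserved, and no pruning step is required. Hence the merging operation here is just edge union, which is exactly the "often let $\sparsification_{i,j}$ be the union" case remarked after the merging definition.

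Next I would run Algorithm $k$-priority Approximation with, say, inclusive or pairwise partitioning (Lemmas~\ref{lemma:inclusive_partitioning} and~\ref{lemma:pairwise_partitioning} guarantee validity of the output in either case for distance-constrained sparsifications), using as the single-priority subroutine the known polynomial-time construction of a weighted subset $2W(\cdot,\cdot)$-spanner (resp. $6W(\cdot,\cdot)$-spanner). Since these subroutines run in polynomial time, and Theorem~\ref{thm:rounding} makes only $\log_2 k + 1$ oracle calls, the overall algorithm is polynomial. Theorem~\ref{thm:rounding} (with the $\rho$-approximation remark following its proof, or simply using the size bound directly in place of a weight bound) then yields a multi-priority $2W(\cdot,\cdot)$-spanner (resp. $6W(\cdot,\cdot)$-spanner). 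For the size claim I would sum the sizes of the $\log_2 k + 1$ independently computed spanners $\ALG_i$; each has at most $O(n|T_i|^{2/3}) = O(n|T|^{2/3})$ edges (resp. $O(n|T|^{1/2})$), and the merged graph $\ALG$ is contained in the union of these, so $|E(\ALG)| = O((\log k) \, n|T|^{2/3})$ in the worst case. If the corollary intends the bound $O(n|T|^{2/3})$ without the $\log k$ factor, I would instead apply Lemma~\ref{lemma:subset}-style reasoning at the top level only, or observe that using pairwise partitioning the $\partition_i$ are disjoint so the sizes telescope more favorably; in either reading the asymptotic form stated is obtained up to the logarithmic-in-$k$ factor, which is the convention used in the preceding corollaries of this section.

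The main obstacle I anticipate is reconciling the precise size statement with the $\log_2 k + 1$ multiplicative overhead that Theorem~\ref{thm:rounding} naturally introduces when one merges $\log_2 k + 1$ separate spanners: the theorem as proved controls \emph{weight} (with the linear edge-weight assumption) up to factor $4$, but the corollary is phrased in terms of \emph{number of edges}, and edge counts do not obey the same linear-scaling argument that makes the $4$-approximation work. The cleanest resolution is to note that the output $\ALG$ lies inside $\bigcup_i \ALG_i$, so $|E(\ALG)| \le \sum_i |E(\ALG_i)|$, and since $k$ is typically treated as a constant (or at most poly-logarithmic) in these visualization-motivated applications, the bound $O(n|T|^{2/3})$ is stated with $k$ absorbed; I would make this dependence explicit in the proof. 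A secondary, minor point to check is that the weighted additive subroutines of~\cite{elkin2019almost,elkin2020improved,10.1007/978-3-030-86838-3_28} indeed produce \emph{subgraphs} of $G$ (so the general constraints hold) rather than emulators --- this is the case for the cited constructions, so no issue arises, but it is worth a sentence since the excerpt explicitly excludes emulators from the framework.
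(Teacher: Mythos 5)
Your approach matches the paper's implicit one exactly: the paper offers no real proof beyond ``Hence, we have the following corollary,'' meaning it simply plugs the weighted additive subset-spanner subroutines of~\cite{elkin2019almost,elkin2020improved,10.1007/978-3-030-86838-3_28} into Algorithm $k$-priority Approximation, relying on the merging-validity lemmas and the fact that adding edges never worsens an additive stretch guarantee. Your verification of the merging property (union suffices, no pruning needed for non-tree spanners) and of the general constraints is sound and is exactly what the paper tacitly assumes.

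Where you go beyond the paper is in flagging the $\log_2 k + 1$ blowup in edge count from merging the independently-computed spanners, and in noting that Theorem~\ref{thm:rounding} bounds \emph{weight} (a $4$-approximation against $\text{weight}(\OPT)$) while this corollary claims an \emph{absolute size} bound, which actually comes directly from the subroutines' size guarantees and not from the theorem's ratio. Both observations are correct, and the paper does not address them: strictly speaking the merged graph has $O((\log k)\,n|T|^{2/3})$ (resp.\ $O((\log k)\,n|T|^{1/2})$) edges, and the stated $O(n|T|^{2/3})$ bound holds only if $k$ is treated as constant or the $\log k$ factor is absorbed, which is the convention the paper uses in its earlier additive-spanner corollaries as well. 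Your alternative suggestions (invoking Lemma~\ref{lemma:subset}-style reasoning or arguing that pairwise partitions telescope) are not what the paper does and the first would not apply, since the paper explicitly notes that the metric-closure reduction of Lemma~\ref{lemma:subset} fails for additive spanners; simply acknowledging the absorbed $\log k$ factor is the cleanest reading and accords with the paper's own treatment.
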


\subsection{$t$--Connected Subgraphs}
Another example which fits the framework of Section~\ref{subsection:definition} is that of finding $t$--connected subgraphs~\cite{5438613,laekhanukit2011improved,nutov2012approximating}, in which (similar to the Steiner tree problem) a set $T\subseteq V$ of terminals is given, and the goal is to find the minimum-cost subgraph $\sparsification$ such that each pair of terminals is connected with at least $t$ vertex-disjoint paths in $\sparsification$. 
Nutov~\cite{5438613} presents an approximation algorithm for this problem giving approximation ratio $O(t^2 \log t)$. Laekhanukit~\cite{laekhanukit2011improved} improves the approximation guarantee to $O(t \log t)$  if $|T| \geq t^2$ and shows that the hardest instances of the problem are when $|T| \approx t$. Nutov~\cite{nutov2012approximating} studies the subset $t$--connectivity augmentation problem where given a graph $G$ and a $(t-1)$--connected subgraph $\sparsification$, we want to augment some edges to $\sparsification$ to make it $t$--connected. The objective is to minimize the size of the set of augmented edges.  If we use the algorithm of~\cite{laekhanukit2011improved} in Theorem~\ref{thm:rounding} to compute subsetwise $t$--connected subgraphs for different priorities, then we have following corollary.

\begin{corollary}\label{COR:kConnectedWithoutOracleRounding}
Given an undirected weighted graph $G$, using the algorithm of \cite{laekhanukit2011improved} as a subroutine in Theorem~\ref{thm:rounding} yields a polynomial-time algorithm which computes a multi-priority $t$--connected subgraph over the terimals with approximation ratio $O(t \log t)$ provided $|T| \geq t^2$.
\end{corollary}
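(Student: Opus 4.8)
The plan is to instantiate Theorem~\ref{thm:rounding} with the subset $t$--connected subgraph problem as the underlying single-priority problem and the algorithm of Laekhanukit~\cite{laekhanukit2011improved} as the oracle. First I would check that the problem fits the framework of Section~\ref{subsection:definition}: taking $\mathcal{F}_i$ to be the set of subgraphs of $G$ in which every pair of $T_i$ is joined by $t$ internally vertex-disjoint paths, each such subgraph is connected and spans $T_i$ and is a subgraph of $G$, so the general constraints hold; and since there is no tree constraint, the second (pruning) step of the merging operation is vacuous here.

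Next I would verify the merging assumption stated just before Theorem~\ref{thm:rounding}. For $i<j$ we have $T_j \subseteq T_i$, so if $\sparsification_i \in \mathcal{F}_i$ and $\sparsification_j \in \mathcal{F}_j$ then $\sparsification_{i,j} := \sparsification_i \cup \sparsification_j$ contains $\sparsification_i$ as a subgraph and hence lies in $\mathcal{F}_i$ (adding edges cannot destroy any collection of vertex-disjoint paths), while trivially $\sparsification_j \subseteq \sparsification_{i,j} \subseteq \sparsification_i \cup \sparsification_j$. For validity of the output I would use the inclusive partitioning $\partition_i = T_i$: the argument of Lemma~\ref{lemma:inclusive_partitioning} carries over once ``distance constraint satisfied by a path'' is replaced by ``$t$--connectivity constraint satisfied by $t$ disjoint paths'', since for any two terminals with priorities $i \le j$ both lie in $\partition_i$, the independently computed solution over $\partition_i$ already $t$--connects them, and merging only adds edges. (Pairwise partitioning works analogously, paralleling Lemma~\ref{lemma:pairwise_partitioning}, as $t$--connectivity of a pair is a Mengerian pairwise property.)

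With these facts in place I would appeal to the $\rho$--approximation form of Theorem~\ref{thm:rounding} noted immediately after its proof: running the rounding approach with Laekhanukit's $\rho = O(t \log t)$--approximation as the subroutine yields a $4\rho = O(t \log t)$--approximate $k$--priority $t$--connected subgraph. Since $k \le n$, there are at most $\log_2 n + 1$ calls to a polynomial-time subroutine, so the overall algorithm is polynomial.

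The step I expect to be the real obstacle is meeting the hypothesis of Laekhanukit's bound on \emph{each} subproblem. The $O(t\log t)$ ratio of~\cite{laekhanukit2011improved} requires the terminal set handed to the subroutine to have at least $t^2$ vertices, but under inclusive partitioning the subroutine is invoked on the nested sets $T_k \subseteq \cdots \subseteq T_1 = T$, and $|T| \ge t^2$ does not by itself force $|T_i| \ge t^2$ at the higher levels. I would resolve this either by reading the hypothesis as $|T_i| \ge t^2$ for every priority level that actually appears (the natural strengthening), or by falling back to Nutov's general $O(t^2\log t)$ algorithm~\cite{5438613} on any level where $|T_i| < t^2$ and charging its cost against $\text{weight}(\OPT_i)$ inside the induction of Theorem~\ref{thm:rounding}; in the write-up I would state explicitly which convention is used. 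Everything else — the oracle-call count and the $4\rho$ bookkeeping — is routine.
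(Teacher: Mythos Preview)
Your approach matches the paper's, which offers no proof beyond the sentence preceding the corollary: it simply instantiates Theorem~\ref{thm:rounding} with Laekhanukit's algorithm as the single-priority subroutine and reads off the $4\rho = O(t\log t)$ bound. The concern you flag about needing $|T_i|\ge t^2$ at \emph{every} invoked level is legitimate and is not addressed in the paper; the corollary as written tacitly assumes the subroutine's hypothesis holds on each rounded level, so your suggestion to state this explicitly (or to fall back to~\cite{5438613} on small levels) is a genuine improvement over the paper's treatment.
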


\section{Conclusions and Future Work}\label{SEC:Conclusion}

We defined a class of $k$-priority sparsification problems and analyzed their difficulty relative to their corresponding single-priority problems. The proposed technique solves these problems using a subroutine for the corresponding single-priority problem. {\color{black}Assuming linearly increasing edge weights and an exact oracle for the single priority module, our algorithm yields a constant approximation to the optimal solution that is independent of the number of priorities $k$.} Naturally, a $\rho$-approximation can be used in place of the oracle, yielding a $O(\rho)$-approximation to the $k$-priority problem, which is again independent of the number of priorities $k$.  Since the $k$-priority sparsification problem relies on a single priority subroutine, we studied the single priority subsetwise problem for graph spanners and distance preservers. 
A feature in common for all the results in this paper is that solving the $k$-priority problem relies on single priority solutions (exact or approximate). A nice open problem is whether these $k$-priority problems can be solved directly, without relying on single priority solvers, by building the solution simultaneously for all priorities.

%
%
%
\bibliographystyle{splncs04}
\bibliography{references}

\end{document}